\title{Universal gauge-invariant cellular automata}
\author{Pablo Arrighi}{Université Paris-Saclay, CNRS, LMF, 91190 Gif-sur-Yvette, France \and IXXI, Lyon}{}{https://orcid.org/0000-0002-3535-1009}{} 
\author{Marin Costes}{ENS Paris-Saclay, CNRS, LMF, 91190 Gif-sur-Yvette, France}{marin.costes@ens-paris-saclay.fr}{https://orcid.org/0000-0003-0915-9192}{}
\author{Nathana\"el Eon}{Aix-Marseille Universit\'e, Universit\'e de Toulon, CNRS, LIS, Marseille, France}{}{https://orcid.org/0000-0003-0649-0891}{}
\authorrunning{P. Arrighi, M. Costes and N. Eon} 
\keywords{Cellular automata, Gauge-invariance, Universality} 
\begin{document}

\maketitle

\begin{abstract} Gauge symmetries play a fundamental role in Physics, as they provide a mathematical justification for the fundamental forces. Usually, one starts from a non-interactive theory which governs `matter', and features a global symmetry. One then extends the theory so as make the global symmetry into a local one (a.k.a gauge-invariance). We formalise a discrete counterpart of this process, known as gauge extension, within the Computer Science framework of Cellular Automata (CA). We prove that the CA which admit a relative gauge extension are exactly the globally symmetric ones (a.k.a the colour-blind). We prove that any CA admits a non-relative gauge extension. Both constructions yield universal gauge-invariant CA, but the latter allows for a first example where the gauge extension mediates interactions within the initial CA.
\end{abstract}

\section{Introduction}
\label{sec:intro}

Symmetries are an essential concept, whether in Computer science or in Physics. In this paper we explore the Physics concept of gauge symmetry by taking it into the rigorous, Computer Science framework of Cellular Automata (CA). Implementing gauge-symmetries within CA may prove useful in the fields of numerical analysis; quantum simulation; and digital Physics---as these are constantly looking for discrete schemes that simulate known Physics. Quite often, these discrete schemes seek to retain the symmetries of the simulated Physics; whether in order to justify the discrete scheme as legitimate or as numerically accurate (e.g. by doing the Monte Carlo-counting right \cite{HastingsMonteCarlo}). More specifically, the introduction of gauge-symmetries within discrete-time lattice models has proven useful already in the field of Quantum Computation, where gauge-invariant Quantum Walks and Quantum Cellular Automata \cite{ArrighiQED} provide us with concrete digital quantum simulation algorithms for particle Physics. These come to complement the already existing continuous-time lattice models of particle Physics \cite{georgescu2014quantum,notarnicola2020real}. Another field where this has played a role is Quantum error correction \cite{kitaev2003fault,nayak2008non}, where it was noticed that gauge-invariance amounts to invariance under certain local errors. This echoes the fascinating albeit unresolved question of noise resistance within Cellular Automata \cite{harao1975fault,Toom}. 

In \cite{salo2013color} the authors study $G$--blind cellular automata, where $G$ is a group of permutation acting on the state space of cells. Blind cellular automata are globally symmetric under $G$, i.e. the global evolution commutes with the application of the same $g\in G$ at once on every cell. They show the surprising result that any CA can be simulated by such a globally symmetric CA, when $G$ is the symbol permutations. Globally symmetric CA are therefore universal. Local symmetry, aka gauge symmetry, is way more stringent however: a different $g_x$ can now be chosen for every cell $x$. Still, in this paper, we prove that any CA can be extended into a gauge-invariant CA. Gauge-invariant CA are therefore universal.

From a Physics perspective one usually motivates the demand for a certain gauge symmetry, from an already existing global symmetry. From a mathematical perspective, the gauge field that then gets introduced for that purpose is often seen as a connection between two gauge choices at neighbouring points. This raises questions however, because there is no immediate reason why a gauge symmetry should necessarily arise from an already existing global symmetry (one could ask for a certain ad hoc gauge symmetry from scratch). Nor is there an immediate reason why a gauge field should necessarily be interpretable as a connection (a gauge field could be made to hold absolute instead of relative information about gauge choices).\\
In this paper, we prove an original result relating these two folklore perspectives about gauge theories using purely combinatorial definitions. Namely, we prove that the CA that admit {\em relative} gauge extension are exactly those that have the corresponding global symmetry in the first place.

Although the gauge field was initially introduced in order to obtain gauge symmetry, it allows for new dynamics. Amongst those dynamics, one could ask for the matter field to influence the dynamics of the gauge field, as is the case in Physics. In this paper, we provide a first a Gauge-invariant CA where this is happening. This CA is obtained through a non-relative gauge extension. We leave it as an open question whether the same can be achieved though a relative gauge extension.

The present work builds upon two previous papers by a subset of the authors, which laid down the basic definitions of gauge-invariance for CA and provided a first set of examples, in both the abelian \cite{arrighi2018gauge} and the non-abelian \cite{arrighi2019non} cases.  Sec. \ref{sec:definitions} first recalls these basic definitions, but it also formalises the notions of general and relative gauge extensions, which were still missing. Sec. \ref{sec:extension1} shows that CA admit a relative gauge extension if and only if they are globally symmetric. Sec. \ref{sec:extension2} shows that any CA admits a general gauge extension. Sec. \ref{sec:universal} draws the consequences upon universality. Sec. \ref{sec:sourcing} provides a first example of a gauge-extended CA whose gauge field is sourced by the matter field.
\section{Definitions}
\label{sec:definitions}

\subsection{Cellular Automata}
A cellular automaton (CA) consist in an array of identical cells, each of which may take one in a finite number
of possible states. The whole array evolves in discrete time steps by iterating a function $F$. Moreover
this global evolution $F$ is shift-invariant (it acts everywhere the same) and causal (information cannot
be transmitted faster than some fixed number of cells per time step). Let us make this formal.   
\begin{definition}[Configuration]\label{def : Configuration}
A configuration c over an alphabet $\Sigma$ and a space $\mathbb{Z}^d$ is a function that associates a state to each point:
$$c\ :\ \mathbb{Z}^d\longrightarrow\Sigma$$
The set of all configurations will be denoted $\Sigma^{\mathbb{Z}^d} $.
\end{definition}

A configuration should be seen as the state of the CA at a given time. We use the short-hand notation $c_x=c(x)$ for $x\in \mathbb{Z}^d$ and $c_{|I}$ for the configuration c restricted to the set $I$---i.e.  $c:I\longrightarrow \Sigma$---for $I\subseteq\mathbb{Z}^d$. The association of a position and its state is called a cell.

The way to describe a global evolution $F$ that is causal, is via the provision of a local rule. A local rule takes as input a configuration restricted to $x+\mathcal{N}$, and outputs the next value of the cell $x$, i.e.  $f\ :\ \Sigma^\mathcal{N}\longrightarrow\Sigma$, where $\mathcal{N}$ is a finite subset of $\mathbb{Z}^d$ referred to as `the neighbourhood'. Applying $f$ at every position $x$ simultaneously, implements $F$.

\begin{definition}[Cellular automata]\label{def : Cellular Automata}
The CA $F$ having neighbourhood $\mathcal{N}$ and local function $f:\Sigma^\mathcal{N}\longrightarrow\Sigma$ is the function $F:\Sigma^{\mathbb{Z}^d}\longrightarrow\Sigma^{\mathbb{Z}^d}$ such that for all $x\in\mathbb{Z}^d$,
$$F(c)_x=f(c_{|x+\mathcal{N}}).$$
\end{definition}
We sometimes denote by $c_{t,x}$ the value of a cell at position $x$ and time $t$, where $c_{t+1}=F(c_t)$.

\subsection{Global versus gauge symmetry}

\paragraph*{Global symmetry}

We say that a CA is globally symmetric whenever its global evolution is invariant under the application of the same alphabet permutation at every position at once. Globally symmetric CA are also known as $G$-blind CA \cite{salo2013color} with $G$ a group of permutations over $\Sigma$. 
\begin{definition}[Globally symmetric]
    Let $F:\Sigma^{\mathbb{Z}^n} \rightarrow \Sigma^{\mathbb{Z}^n}$ be a CA and $G$ be a group of permutations over $\Sigma$. For all $g\in G$, let $\bar{\gamma}$ denote its application at every position simultaneously: $\bar{\gamma} (c)_i = g(c_i)$.
    We say that $F$ is globally $G$-symmetric if and only if, for any $g$ in $G$, we have 
    $F\circ \bar{\gamma} = \bar{\gamma}\circ F$.
\end{definition}

\paragraph*{Local/gauge symmetry}

We say that a CA is locally symmetric whenever its global evolution is invariant under the application of a local permutation at every position. The first difference with globally symmetric CA is the permutation is now allowed to differ from one position to the next. The second difference is that the permutation is now allowed to act on the surrounding cells. Locally symmetric CA are referred to as gauge-invariant CA \cite{arrighi2018gauge,arrighi2019non, ArrighiGaugeInvarianceRecap}. 

\begin{definition}[Local gauge-transformation group]\label{def : Local gauge-transformation group}
Let $g$ be a permutation over $\Sigma^{(2s+1)^d}$, with $s\in\mathbb{N}$. We denote by $g_x:\Sigma^{\mathbb{Z}^d}\longrightarrow\Sigma^{\mathbb{Z}^d}$ the function that acts as $g$ on the cells at $[\![x-s,x+s]\!]^d$, and trivially everywhere else. A {\em local gauge-transformation group} $G$ is a group of bijections over $\Sigma^{(2s+1)^d}$, such that for any $g,h\in G$ and any $x\neq y\in {\mathbb Z}^d$, $g_x\circ h_y=h_y\circ g_x$. 
\end{definition}
This permutation condition makes it irrelevant to consider which local gauge-transformation gets applied first, so that the product $g_x h_y$ be defined. The condition is decidable, checking it over the $[\![-2s,+2s]\!]^d$ suffices.

\begin{definition}[Gauge-transformation]\label{def : Gauge-transformations}
Consider $G$ a group of local gauge-transformations.
A {\em gauge-transformation} is then specified by a function $\gamma:\mathbb{Z}^d\longrightarrow G$. It is interpreted as acting over ${\cal C}$ as follows:
$$\gamma(c)=(\prod_{x\in\mathbb{Z}^d}\gamma_x)(c),$$
where $\gamma_x$ is short for $\gamma(x)$. We denote by $\Gamma$ the set of gauge-transformations.
\end{definition}
Notice how an element $\gamma\in\Gamma$ may be thought of as a configuration over the alphabet $G$---with $g_x$ the state at $x$.

Gauge-invariant CA are `insensitive' to gauge-transformations: performing $\gamma$ before $F$ amounts to performing some $\gamma'$ after $F$.
\begin{definition}[Gauge-invariant CA]\label{def : Gauge symmetry}
Let $F$ be a CA, $G$ be a local gauge-transformation group, and $\Gamma$ be the corresponding set of gauge-transformations. $F$ is $\Gamma$-gauge-invariant if and only if there exists a CA $Z$ over the alphabet $G$, such that for all $\gamma\in\Gamma$:
$$Z(\gamma)\circ F=F\circ\gamma.$$
\end{definition}
The reason why $\gamma'$ must result from a CA $Z$, instead of being left fully arbitrary, is because $F$ is deterministic, shift-invariant and causal---from which it follows that $\gamma'$, if it exists, can be computed deterministically, homogeneously and causally from the $\gamma$ applied before.
Thus, the above is demanding a weakened commutation relation between the evolution $F$ and the set of gauge-transformations $\Gamma$. In practice in Physics $Z$ is often the identity, making gauge-invariance a commutation relation. This will be the case in our constructions.

\subsection{(Relative) gauge extensions}

In Physics, one usually begins with a theory that explains how matter freely propagates, i.e. in the absence of forces. This initial theory solely concerns the `matter field', and is not gauge-invariant. For instance, the Dirac equation, which dictates how electrons propagate, is not $U(1)$--gauge-invariant. Next, one enriches the initial theory with a second field, the so-called gauge field, so as to make the resulting theory gauge-invariant. For instance, the case of the electron, $U(1)$--gauge-invariance is obtained thanks to the addition of the electromagnetic field. The resulting theory can still account for the free propagation of the matter field, but the presence of the gauge field also allows for richer behaviours, e.g. electromagnetism. Quite surprisingly three out of the four fundamental forces can be introduced mathematically, and thereby justified by gauge symmetry requirements, through this process of `gauge extension'. 

But when is it the case that a theory is a gauge extension of another, exactly? In Physics this is left informal. One of the contributions of this paper is the provide a first rigorous definition of the notion of gauge extension, and of its relative subcase, in the discrete context of CA.

\paragraph*{General gauge extension}

A gauge extension must simulate the initial CA, extend the required gauge-transformations, and achieve gauge-invariance overall:
\begin{definition}[Gauge extension]\label{def : Absolute-gauge extension}
Let $F$ be a CA over alphabet $\Sigma$. Let $\Gamma$ be a gauge-transformation group. Let $\Lambda$ be a finite set which will serve as the gauge field alphabet.
A gauge extension of $(F,\Gamma)$ is a tuple $(F', \Gamma')$ with $F'$ a CA over alphabet $\Sigma \times \Lambda$ and $\Gamma'$ a gauge-transformation group, such that:

\begin{itemize}
    \item \label{subdef:gaugeextsimulation} \emph{(Simulation)} there exists $\epsilon\in\Lambda$ such that $F'$ simulates one step of $F$ when the gauge field value is set $\epsilon$ everywhere. In other words for any $c\in \Sigma^{\mathbb{Z}^d}$, there exists $e'\in \Lambda^{\mathbb{Z}^d}$, $$F'(c,e) = (F(c),e')$$ 
    where $e$ is the constant gauge field configuration $(x\mapsto \epsilon)$. 
    \item \label{subdef:gaugeextextension}\emph{(Extension)} $\Gamma'$ extends $\Gamma$: there exists a bijection $B:\Gamma'\rightarrow \Gamma$ such that for any $\gamma'\in \Gamma'$, there exists a CA $L$ over alphabet $\Lambda$, such that for any $c, \lambda \in \Sigma^{\mathbb{Z}^d} \times \Lambda^{\mathbb{Z}^d}$,
    \begin{equation}\label{defeq : gaugetransfext}
        \gamma'(c,\lambda) = (\gamma(c),L(\lambda))
    \end{equation}
    where $\gamma = B(\gamma')$.
    \item \label{subdef:gaugeextinv}\emph{(Gauge-invariance)} $F'$ is $\Gamma'$-gauge-invariant.
\end{itemize}

\noindent We used implicitly the canonical bijection between $(\Sigma \times \Lambda)^{\mathbb{Z}^d}$ and $\Sigma^{\mathbb{Z}^d} \times \Lambda^{\mathbb{Z}^d}$
\end{definition}

Notice that when the gauge field does not evolve in time, we can rewrite the simulation condition as $F'(c,e) = (F(c),e)$. Then $F$ is a sub-automaton of $F'$ \cite{guillon:hal-00521624}, whenever the gauge field is set to $e$.

Intuitively, the gauge field's role is to keep track of which gauge-transformation got applied where, so as to hold enough information to insure gauge-invariance. There are different ways to do this; for instance one could indeed store the `gauge' at each point, i.e. which gauge-transformation has happened at the specific point. But one could be more parsimonious and store just the `relative gauge', i.e. which gauge-transformation relates that which has happened at every two neighbouring points. 

\paragraph*{Relative gauge extension}

The standard choice in the Physics literature is to place the gauge field between the matter cells only---i.e. on the links between two cells. This choice of layout is sometimes referred to as the `quantum link model' \cite{chandrasekharan1997quantum,silvi2014}. The mathematical justification for this choice, is precisely that the gauge field may be interpreted as relative information between neighbouring matter cells. Geometrically speaking, it may be understood as a `connection' relating two closeby `tangent spaces' on a manifold. 

Our previous definition of general gauge extensions does allow for such relative gauge extensions as a particular case, up to a slight recoding, as shown in Fig-\ref{fig:Equivalence between two different type of GE}, i.e. the link model is simulated by transferring the value of a gauge field on a link, to the vertex at the tip of the link.

\begin{figure}[ht!]
    \centering
    \begin{subfigure}[b]{0.49\textwidth}
    \centering
    \resizebox{\textwidth}{!}{\input{Link_model_1}}
    \caption{The link model layout\ldots}
    \label{Link model GE}
    \end{subfigure}
    \begin{subfigure}[b]{0.49\textwidth}
    \centering
    \resizebox{\textwidth}{!}{\definecolor{gaugeCol}{rgb}{0.40390625,0.6109375,0.09109375}
\definecolor{evolutionCol}{rgb}{0.3,0.3,0.3}
\definecolor{gaugeCol}{rgb}{0.40390625,0.6109375,0.09109375}
\definecolor{evolutionCol}{rgb}{0.3,0.3,0.3}
\definecolor{grayCol}{rgb}{0.6,0.6,0.6}

\newcommand{\stateTikz}[6]{
  \draw[color=#5, #6, very thick](#1- #3 , #2) rectangle (#1 + #3, #2 + #4);
}

\newcommand{\gaugestate}[5]{
  \ifthenelse{#5 > 0}{\def\mycol{gaugeCol}}{\def\mycol{white}};
  \filldraw[fill=\mycol, very thick](#1- #3 , #2) rectangle (#1 + #3, #2 + #4);
}

\newcommand{\stateTikzz}[7]{
\ifthenelse{#5 > 0 \AND #6>0}{\def\mycola{white}}{\def\mycola{black}};
\ifthenelse{#5 > 0}{\def\mycolb{black}}{\def\mycolb{white}};
\ifthenelse{#6 > 0}{\def\mycolc{black}}{\def\mycolc{white}};
\filldraw[color=\mycola, fill=\mycolb, #7, thick](#1- #3 , #2) 
    rectangle (#1 , #2 + #4);
\filldraw[color=\mycola, fill=\mycolc, #7, thick](#1, #2)
    rectangle (#1 + #3 , #2 + #4);
}

\newcommand{\stateTikzzz}[6]{
\ifthenelse{#5 = 0}{\def\mycola{black}\def\mycolb{white}\def\mycolc{white}}{};
\ifthenelse{#5 = 1}{\def\mycola{black}\def\mycolb{white}\def\mycolc{black}}{};
\ifthenelse{#5 = 2}{\def\mycola{black}\def\mycolb{black}\def\mycolc{white}}{};
\ifthenelse{#5 = 3}{\def\mycola{white}\def\mycolb{black}\def\mycolc{black}}{};
\ifthenelse{#5 = 4}{\def\mycola{grayCol}\def\mycolb{white}\def\mycolc{grayCol}}{};
\ifthenelse{#5 = 5}{\def\mycola{grayCol}\def\mycolb{grayCol}\def\mycolc{white}}{};
\ifthenelse{#5 = 6}{\def\mycola{white}\def\mycolb{black}\def\mycolc{grayCol}}{};
\ifthenelse{#5 = 7}{\def\mycola{white}\def\mycolb{grayCol}\def\mycolc{black}}{};
\ifthenelse{#5 = 8}{\def\mycola{white}\def\mycolb{grayCol}\def\mycolc{grayCol}}{};
\filldraw[color=\mycola, fill=\mycolb, #6, thick](#1- #3 , #2) 
    rectangle (#1 , #2 + #4);
\filldraw[color=\mycola, fill=\mycolc, #6, thick](#1, #2)
    rectangle (#1 + #3 , #2 + #4);
}

\newcommand{\linefive}[6]{
	\begin{scope}[shift={(0,#1)}]
        \stateTikzzz{0}{0}{1}{1}{#2}{dotted};
        \stateTikzzz{4}{0}{1}{1}{#3}{};
        \stateTikzzz{8}{0}{1}{1}{#4}{};
        \stateTikzzz{12}{0}{1}{1}{#5}{};
        \stateTikzzz{16}{0}{1}{1}{#6}{dotted};
    \end{scope}
}
\newcommand{\gaugefour}[5]{
    \begin{scope}[shift={(0,#1)}]
        \color{gaugeCol}
        \gaugestate{2}{0.1}{0.4}{0.8}{#2};
        \gaugestate{6}{0.1}{0.4}{0.8}{#3};
        \gaugestate{10}{0.1}{0.4}{0.8}{#4};
        \gaugestate{14}{0.1}{0.4}{0.8}{#5};
    \end{scope}
}

\newcommand{\evenline}[2]{
	\begin{scope}[shift={(0,#1)}]
        \stateTikz{0}{0}{1}{1}{#2}{dotted};
        \stateTikz{8}{0}{1}{1}{#2}{};
        \stateTikz{16}{0}{1}{1}{#2}{dotted};
    \end{scope}
}
\newcommand{\evenout}[2]{
	\begin{scope}[shift={(0,#1)}]
        \draw[color=#2] (1,1) -- (3,3)
                        (9,1) -- (11,3)
                        (7,1) -- (5,3)
                        (15,1) -- (13,3);
    \end{scope}
}
\newcommand{\evenouttwo}[2]{
    \begin{scope}[shift={(0,#1)}]
        \draw[color=#2] (1,1) -- (3.9,2) -- (3.9,3)
                        (7,1) -- (4.1,2) -- (4.1,3)
                        (9,1) -- (11.9,2) -- (11.9,3)
                        (15,1) -- (12.1,2) -- (12.1,3);
        \filldraw[color=#2, fill=white] (4,2) circle (0.3);
        \filldraw[color=#2, fill=white] (12,2) circle (0.3);
    \end{scope}
}

\newcommand{\oddline}[2]{
	\begin{scope}[shift={(0,#1)}]
        \stateTikz{4}{0}{1}{1}{#2}{};
        \stateTikz{12}{0}{1}{1}{#2}{};
    \end{scope}
}
\newcommand{\oddout}[2]{
	\begin{scope}[shift={(0,#1)}]
        \draw[color=#2] (1,3) -- (3,1)
                        (9,3) -- (11,1)
                        (7,3) -- (5,1)
                        (15,3) -- (13,1);
    \end{scope}
}

\newcommand{\oddouttwo}[2]{
    \begin{scope}[shift={(0,#1)}]
        \draw[color=#2, dashed] (-.1, 3) -- (-.1,2) -- (-1,1.6)
                                (16.1,3) -- (16.1,2) -- (17,1.6);
        \draw[color=#2] (3,1) -- (0.1,2) -- (0.1,3)
                        (5,1) -- (7.9,2) -- (7.9,3)
                        (11,1) -- (8.1,2) -- (8.1,3)
                        (13,1) -- (15.9,2) -- (15.9,3);

        \filldraw[color=#2, fill=white] (0,2) circle (0.3);
        \filldraw[color=#2, fill=white] (8,2) circle (0.3);
        \filldraw[color=#2, fill=white] (16,2) circle (0.3);
    \end{scope}
}

\newcommand{\gauge}[1]{
    \begin{scope}[shift={(0,#1)}]
        \foreach \i in {2,6,...,14}{
            \stateTikz{\i}{0.1}{0.4}{0.8}{gaugeCol}{};
        }
    \end{scope}
}
\newcommand{\gaugeout}[1]{
    \begin{scope}[shift={(0,#1)}]
        \foreach \i in {2,6,...,14}{
            \draw[color=gaugeCol] (\i,0.9) -- (\i, 3.1);
        }
    \end{scope}
}

\newcommand{\evolution}[1]{
    \begin{scope}[shift={(0,#1)}]
        \foreach \i in {2,6,...,14}{
            \filldraw[color=evolutionCol, fill=white] (\i,2) circle (0.3);
        }
    \end{scope}
}

\newcommand{\grid}[2]{
    \begin{scope}[shift={(#1,#2)}]
        \draw[->, very thick] (-1.5,-0.5) -- coordinate (x axis mid) (17.5,-0.5);
        \draw[->, very thick] (-1.5,-0.5) -- coordinate (y axis mid) (-1.5,7.5);
        \foreach \x in {0,4,...,16}
            \draw[very thick] (\x,-0.4) -- (\x,-0.7);
        \foreach \y in {0.5,3.5,...,6.5}
            \draw[very thick] (-1.7,\y) -- (-1.4,\y); 
        \node at (18,-0.5) {\Huge $x$};
        \node at (-1.5,8) {\Huge $t$};
    \end{scope}
}

\newcommand{\gridgauge}[2]{
    \begin{scope}[shift={(#1,#2)}]
        \draw[->, very thick] (-1.5,-0.5) -- coordinate (x axis mid) (17.5,-0.5);
        \draw[->, very thick] (-1.5,-0.5) -- coordinate (y axis mid) (-1.5,7.5);
        \foreach \x in {0,4,...,16}
            \draw[very thick] (\x,-0.4) -- (\x,-0.7);
        \foreach \x in {2,6,...,14}
            \draw[very thick] (\x,-0.5) -- (\x,-0.6);
        \foreach \y in {0.5,3.5,...,6.5}
            \draw[very thick] (-1.7,\y) -- (-1.4,\y); 
        \node at (18,-0.5) {\Huge $x$};
        \node at (-1.5,8) {\Huge $y$};
    \end{scope}
}

\definecolor{gaugeCol}{rgb}{0.40390625,0.6109375,0.09109375}
\definecolor{evolutionCol}{rgb}{0.3,0.3,0.3}

\begin{tikzpicture}

\evenline{0}{black};
\oddline{3}{black};
\evenline{6}{black};

\oddline{0}{black}
\evenline{3}{black}
\oddline{6}{black}

\draw[color=red,thick] (8,3.5) circle (4);


\draw[color=gaugeCol,dashed,thick](-0.60,3.2) rectangle (0,3.8);\draw[color=gaugeCol,dashed,thick](0,3.2) rectangle (0.60,3.8);

\draw[color=gaugeCol,thick](3.4,3.2) rectangle (4,3.8);\draw[color=gaugeCol,thick](4,3.2) rectangle (4.6,3.8);

\draw[color=gaugeCol,thick](7.4,3.2) rectangle (8,3.8);\draw[color=gaugeCol,thick](8,3.2) rectangle (8.6,3.8);

\draw[color=gaugeCol,thick](11.4,3.2) rectangle (12,3.8);\draw[color=gaugeCol,thick](12,3.2) rectangle (12.6,3.8);

\draw[color=gaugeCol,dashed,thick](15.4,3.2) rectangle (16,3.8);\draw[color=gaugeCol,dashed,thick](16,3.2) rectangle (16.6,3.8);


\draw[color=gaugeCol,dashed,thick](-0.60,6.2) rectangle (0,6.8);\draw[color=gaugeCol,dashed,thick](0,6.2) rectangle (0.60,6.8);

\draw[color=gaugeCol,thick](3.4,6.2) rectangle (4,6.8);\draw[color=gaugeCol,thick](4,6.2) rectangle (4.6,6.8);

\draw[color=gaugeCol,thick](7.4,6.2) rectangle (8,6.8);\draw[color=gaugeCol,thick](8,6.2) rectangle (8.6,6.8);

\draw[color=gaugeCol,thick](11.4,6.2) rectangle (12,6.8);\draw[color=gaugeCol,thick](12,6.2) rectangle (12.6,6.8);

\draw[color=gaugeCol,dashed,thick](15.4,6.2) rectangle (16,6.8);\draw[color=gaugeCol,dashed,thick](16,6.2) rectangle (16.6,6.8);


\draw[color=gaugeCol,dashed,thick](-0.60,0.2) rectangle (0,0.8);\draw[color=gaugeCol,dashed,thick](0,0.2) rectangle (0.60,0.8);

\draw[color=gaugeCol,thick](3.4,0.2) rectangle (4,0.8);\draw[color=gaugeCol,thick](4,0.2) rectangle (4.6,0.8);

\draw[color=gaugeCol,thick](7.4,0.2) rectangle (8,0.8);\draw[color=gaugeCol,thick](8,0.2) rectangle (8.6,0.8);

\draw[color=gaugeCol,thick](11.4,0.2) rectangle (12,0.8);\draw[color=gaugeCol,thick](12,0.2) rectangle (12.6,0.8);

\draw[color=gaugeCol,dashed,thick](15.4,0.2) rectangle (16,0.8);\draw[color=gaugeCol,dashed,thick](16,0.2) rectangle (16.6,0.8);


\draw (7.7,3.5) node {W};
\draw (8.3,3.5) node {S};
\draw (8.3,6.5) node {N};
\draw (11.7,3.5) node {E};

\gridgauge{0}{0}

\end{tikzpicture}}
    \caption{\ldots encoded in the general gauge extension layout.}
    \end{subfigure}
    \caption{Capturing the link model used for relative gauge extensions with the general definition.}
    \label{Equivalent centered GE}
    \label{fig:Equivalence between two different type of GE}
\end{figure}



The following specialises the previous, mathematical notion of gauge extension, to the restricted way in which it is understood in Physics:
\begin{definition}[Relative gauge extension]\label{def : relative gauge extension}
    Given a CA $F$ and a local gauge-transformation group $\Gamma$ of radius $s=0$, we say that a gauge extension $(F',\Gamma')$ is {\em relative} when:
    \begin{alphaenumerate}
        \item the gauge field is positioned on the links $(x,x+e_d)$, where $e_d$ takes values in \\$\{(1\ 0\ 0 \ldots), (0\ 1\ 0\ \ldots),\ldots\}$.
        \item the gauge field takes values in $G$---i.e. $\Lambda=G$
        \item for every position $x$ the gauge-transformations $\Gamma'$ act both on the matter field at $x$ according to $g_x\in G$, and on the gauge fields of its links, as follows:
        \begin{equation}\label{eq:defgammaext}
            \begin{cases}
                g_{x}(a)_{(x-e_d,x)} &= g_{x} \circ a_{(x-e_d,x)}\\
                g_{x}(a)_{(x,x+e_d)} &= a_{(x,x+e_d)} \circ g_{x}^{-1}
            \end{cases}
        \end{equation}
    \end{alphaenumerate}
\end{definition}
Thus relative extension keeps track of the difference of gauge between two neighbouring cells.

The above definitions were given for the $\mathbb{Z}^d$ grid, in order to establish the notion of gauge extension in full generality. The next section, however, will be given just in one dimension ($d=1$) for clarity. We have established it in arbitrary dimension $d$ in a private manuscript.

\section{Globally symmetric CA admit a relative gauge extension}
\label{sec:extension1}

From a Physics perspective, the gauge symmetry one seeks to impose usually comes from an already existing global symmetry. We show here that there is an equivalence between being globally $G$-symmetric and having a gauge extension with respect to $G$ a subgroup of the permutations of $\Sigma$.

\begin{theorem}[Global symmetry and relative gauge extension]\label{theo:globaltolocal}
    Let $F$ be a CA over alphabet $\Sigma$, $G$ a subgroup of the permutations of $\Sigma$ and $\Gamma$ the set of gauge-transformations defined using $G$ as the group of local gauge-transformations. Then the following two properties are equivalent:
    \begin{romanenumerate}
        \item \label{theo:piblind1}$F$ is globally $G$-symmetric
        \item \label{theo:piblind2} $(F,\Gamma)$ admits a relative gauge extension $(F',\Gamma')$ with the identity for the gauge field evolution, such that $F'$ commutes with any element of $\Gamma'$ (stronger than gauge-invariance because it does not require a $Z$-map).
    \end{romanenumerate}
\end{theorem}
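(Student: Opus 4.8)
The plan is to prove the two implications separately, with the forward direction \ref{theo:piblind1}$\Rightarrow$\ref{theo:piblind2} carrying essentially all of the work. For that direction I would build $F'$ explicitly over the alphabet $\Sigma\times G$, reading the second component as the gauge field sitting on the right-link $(x,x+1)$ (using the link-to-vertex recoding of Figure~\ref{fig:Equivalence between two different type of GE}). I let the gauge field be frozen, so that the gauge-field evolution is the identity, and I take the simulation value to be $\epsilon=e$, the identity of $G$, i.e.\ the trivial connection. For the matter update I would \emph{dress} the local rule $f$ by parallel transport: to compute the image at $x$, every neighbouring matter value $c_{x+k}$ is first carried back into the frame of $x$ along the links, using $a_{(x-1,x)}$ to move rightwards and $a_{(x,x+1)}^{-1}$ to move leftwards (and, for a wider neighbourhood $\mathcal N$, the corresponding path-ordered product of link values and their inverses). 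Concretely, in the nearest-neighbour case I would set
\[
F'(c,a)_x=\Big(f\big(a_{(x-1,x)}(c_{x-1}),\,c_x,\,a_{(x,x+1)}^{-1}(c_{x+1})\big),\,a_{(x,x+1)}\Big).
\]
Since each transported argument involves only finitely many link values, $F'$ is a genuine CA; one checks that the relative gauge-transformations of Definition~\ref{def : relative gauge extension} form a legitimate local gauge-transformation group (neighbouring transformations commute because they touch the shared link only by left- versus right-multiplication). Taking $\Gamma'$ to be this group and $B$ the forgetful bijection induced by $x\mapsto g_x$ discharges the Simulation and Extension requirements; Simulation in particular holds because at $a\equiv e$ the transports are trivial, so $F'(c,e)=(F(c),e)$.

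The heart of this direction is the commutation $F'\circ\gamma'=\gamma'\circ F'$. The key computation is that, under the relative action \eqref{eq:defgammaext}, the combination $a_{(x-1,x)}(c_{x-1})$ transforms as $g_x\big(a_{(x-1,x)}(c_{x-1})\big)$ and likewise $a_{(x,x+1)}^{-1}(c_{x+1})$ transforms as $g_x\big(a_{(x,x+1)}^{-1}(c_{x+1})\big)$ --- the link factors $g_{x-1}$ and $g_{x+1}$ cancelling against the transformed matter precisely because of the left/right conventions in \eqref{eq:defgammaext}. Hence all three arguments fed to $f$ at site $x$ pick up the \emph{same} permutation $g_x$, and I can invoke global $G$-symmetry of $F$ at the level of the local rule, $f(g_x\cdot)=g_x\,f(\cdot)$, to conclude that the matter output at $x$ is multiplied by $g_x$, exactly as $\gamma'$ prescribes. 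On the gauge field there is nothing to prove, since it is frozen while $\gamma'$ acts on it identically before and after $F'$. This yields commutation, i.e.\ gauge-invariance with $Z=\mathrm{id}$, which is the asserted strengthening.

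For the converse \ref{theo:piblind2}$\Rightarrow$\ref{theo:piblind1} I would exploit the single configuration where the gauge field is the trivial connection $e$. Fix $g\in G$ and consider the \emph{constant} gauge-transformation $\gamma'_g\in\Gamma'$ with $g_x=g$ at every site. Its action on the trivial gauge field is $e\mapsto g\,e\,g^{-1}=e$, so $\gamma'_g$ preserves the constant configuration $e$ and acts on matter exactly as $\bar\gamma$. Applying the commutation hypothesis to $(c,e)$ and expanding both sides with Simulation then gives global symmetry: $\gamma'_g\,F'(c,e)=\gamma'_g(F(c),e)=(\bar\gamma(F(c)),e)$, while $F'\,\gamma'_g(c,e)=F'(\bar\gamma(c),e)=(F(\bar\gamma(c)),e)$, and equality of the matter components is exactly $F\circ\bar\gamma=\bar\gamma\circ F$.

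The main obstacle I anticipate is entirely in the forward direction: pinning down the parallel-transport conventions so that every argument of $f$ at $x$ acquires exactly $g_x$ (and not $g_{x\pm1}$, with the inverses placed correctly), since this alignment with \eqref{eq:defgammaext} is precisely what lets the single hypothesis --- global $G$-symmetry --- do all the work. The verifications that $F'$ is shift-invariant and causal, that $\Gamma'$ is a bona fide local gauge-transformation group meeting Definition~\ref{def : relative gauge extension}, and the entire converse, are routine bookkeeping by comparison.
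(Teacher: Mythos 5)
Your proposal is correct and takes essentially the same approach as the paper: your parallel-transport-dressed local rule is exactly the paper's Eq.~\eqref{eq:gifrompiblind} (stated there for general radius $r$), your key cancellation---every argument of $f$ at $x$ acquiring the same $g_x$ under \eqref{eq:defgammaext}, after which global $G$-symmetry yields commutation with a frozen gauge field---is the paper's gauge-invariance computation, and your converse via a constant gauge-transformation fixing the trivial connection matches the paper's argument step for step.
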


\begin{proof}
~\\ (\ref{theo:piblind1} $\Rightarrow$ \ref{theo:piblind2})
Let $f$ be the local rule of $F$ with radius $r$.
Let $F'$ be a CA of radius $r$ over the extended configurations---containing a gauge field in between neighboring cells---such that the gauge field evolution is the identity and 
the local rule $f'$ for the evolution of the matter field is defined as follows: 
\begin{align}\label{eq:gifrompiblind}
    f'(c_{-r}, &a_{(-r+1,-r)}, ..., c_0, a_{(0,1)}, ...,c_r) =\nonumber \\
    &f\left( \prod_{i=-1}^{-r} a_{(i, i+1)} (c_{-r}), ..., a_{(-1,0)}(c_{-1}),\  c_0,\      a_{(0,1)}^{-1}(c_1), ...,\prod_{i=1}^{r} a_{(i-1, i)}^{-1} (c_{r})\right)
\end{align}
where $a$ is the gauge field and 
\begin{align*}
    \prod_{i=-1}^{-j} a_{(i, i+1)}(c_{-j}) &= a_{(-1,0)} \circ \ldots \circ a_{(-j+1,-j+2)} \circ a_{(-j,-j+1)}(c_{-j})\\
    \prod_{i=1}^{j} a_{(i-1, i)}^{-1}(c_{j}) &= a_{(0,1)}^{-1} \circ \ldots \circ a_{(j-2,j-1)}^{-1} \circ a_{(j-1,j)}^{-1}(c_{j}).
\end{align*}
$f'$ is defined so as to apply $f$ on a configuration that is on the same gauge basis---i.e. where relative gauge-transformations are cancelled and every cell is looked through the eyes of the gauge at position $0$.

We shall now prove that $(F',\Gamma')$---with $\Gamma'$ defined through Eq.\eqref{eq:defgammaext}---is a relative gauge extension of $(F,\Gamma)$.

The fact that this extension is relative is immediate from the definition. We therefore need to prove that this extension has the 3 required properties from definition \ref{def : Absolute-gauge extension}.

\begin{itemize}
    \item \emph{(Simulation)} The fact that $F'$ simulates $F$ when the gauge field is the identity is immediate from the definition of $f'$.
    \item \emph{(Extension)} Because $\Gamma'$ is defined through definition \ref{def : relative gauge extension}, it is also immediate that it verifies the extension property.
    \item \emph{(Gauge-invariance)} For any $\gamma'\in\Gamma'$ we will check that $\gamma'\circ F' = F'\circ \gamma'$.
    
    For $j$ between $1$ and $r$, we apply the gauge-transformation on the inputs ($c$ and $a$)---using  Eq.\eqref{eq:defgammaext}---and obtain by simple computation the following: 
    \begin{equation*}
        \begin{cases}
            \prod_{i=1}^{j} \left[ \gamma_{i} \circ a_{(i-1, i)} \circ \gamma_{i-1}^{-1}\right]^{-1} \gamma_j(c_{j}) &=  \gamma_{0} \circ \left[\prod_{i=1}^{j} a_{(i-1, i)}^{-1}(c_{j})\right]\\
            \prod_{i=-1}^{-j} \left[ \gamma_{i+1} \circ a_{(i, i+1)} \circ \gamma_{i}^{-1}\right] \gamma_j(c_{j}) &=  \gamma_{0} \circ \left[\prod_{i=-1}^{-j} a_{(i, i+1)}(c_{j})\right]
        \end{cases}        
    \end{equation*} where $\gamma_i \in G$. Therefore using this and Eq.\eqref{eq:gifrompiblind} we have that 
    \begin{equation}
        \left(F'\circ \gamma' (c,a)\right)_0 =
        \left(f\circ \gamma_0 \left( \begin{matrix}
            \prod_{i=-1}^{-r} a_{(i, i+1)} (c_{-r})\\
            ...\\
            a_{(-1,0)}(c_{-1}) \\
            c_0\\
            a_{(0,1)}^{-1}(c_1)\\
            ... \\
            \prod_{i=r}^{1} a_{(i-1, i)}^{-1} (c_{r})
        \end{matrix}\right)\right)_0
    \end{equation} where $\gamma_0$ is here applied to every element of the tuple.
    Since $F$ is globally $G$-symmetric, we have that $f\circ \gamma_0 = \gamma_0 \circ f$ and  therefore $ \left(F'\circ \gamma' (c,a)\right)_0 = \gamma_0 \circ \left(F'(c,a)\right)_0$ which finishes the proof that $F'\circ \gamma' = \gamma'\circ F'$ through translation invariance and because the gauge field evolution is the identity.
\end{itemize}

~\\ (\ref{theo:piblind2} $\Rightarrow$ \ref{theo:piblind1}) Suppose that $(F',\Gamma')$ is a relative gauge extension of $(F,\Gamma)$, such that $F'$ commutes with any element of $\Gamma'$, we shall prove that $F$ is globally $G$-symmetric (with $\Gamma$ the gauge-transformation group based on $G$).
Let $c$ be a configuration and $e$ denote the empty configuration of the gauge field. For any local gauge-transformation $g$, we write $\bar{\gamma}$ the global gauge-transformation applying $g$ everywhere---$g$ denotes both the element of $G$ and $G'$ depending on the context:
\begin{align*}
    \bar{\gamma}\circ F'(c,e) &= \bar{\gamma}(F(c),a) \tag{Simulation \ref{def : Absolute-gauge extension}}\\
        &= (\bar{\gamma}(F(c)),a') \tag{Extension \ref{def : Absolute-gauge extension}}
\end{align*}
where $a$ and $a'$ can be any gauge field configuration depending on $F'$ and $\bar{\gamma}$. And 
\begin{align*}
    F'\circ\bar{\gamma}(c,e) &= F'(\bar{\gamma}(c),\bar{\gamma}\circ e\circ\bar{\gamma}^{-1}) \tag{Extension \ref{def : relative gauge extension}}\\
        &= F'(\bar{\gamma}(c),e) \\
        &= (F(\bar{\gamma}(c)),a')  \tag{Simulation \ref{def : Absolute-gauge extension}}
\end{align*}
where $a$ and $a'$ can be any gauge field configuration depending on $F'$ and $\gamma$.
The $G'$-gauge-invariance of $F'$ give us $\bar{\gamma}\circ F'(c,e) = F'\circ\bar{\gamma}(c,e)$ and thus 
$$\bar{\gamma}(F(c)) = F(\bar{\gamma}(c)).$$
Therefore $F$ is globally $G$-symmetric.
\end{proof}

This theorem proves useful when looking for relative gauge extensions: first search for a global symmetry. The construction will be used in Sec. \ref{sec:universal} to prove that relative gauge extensions of CA are universal.

\section{Non-globally symmetric CA still admit an absolute gauge extension}
\label{sec:extension2}

We now prove that any CA can be intrinsically simulated by a gauge-invariant one, with respect to any gauge-transformation group, of any radius. The construction of this section uses non-relative gauge extensions, but it allows us to get rid of the prior requirements that there be a global symmetry or that the gauge-transformations be of radius $0$. 

\begin{theorem}[Every CA admits a gauge extension]\label{th:everyCA}
For any CA $F$ and gauge-transformation group $\Gamma$ there exists for some gauge field alphabet a gauge extension $(F',\Gamma')$. Furthermore the local rule of $F'$ acts as the identity over the gauge field.
\end{theorem}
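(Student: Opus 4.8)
The plan is to realise the gauge field as an explicit record of \emph{which} gauge-transformation has been applied at each site, and to let $F'$ evolve the matter by conjugating $F$ with this record. Concretely, I would take $\Lambda = G$ with distinguished element $\epsilon = 1_G$, so that a gauge-field configuration is a map $b:\mathbb{Z}^d\to G$, exactly like a gauge-transformation. To each such $b$ I associate the \emph{gauging operator} $U_b := \prod_{x} (b_x)_x$ acting on $\Sigma^{\mathbb{Z}^d}$, where $(b_x)_x$ applies the block-permutation $b_x$ on $[\![x-s,x+s]\!]^d$. The commutation hypothesis built into the local gauge-transformation group (Definition~\ref{def : Local gauge-transformation group}) makes this product well defined, and $U_b$ is a bijection that is covariant under shifts. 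I would then set
\[
F'(c,b) = \bigl(\,U_b\,F\,U_b^{-1}(c),\; b\,\bigr),
\]
so that the gauge field is left untouched (which delivers the ``Furthermore'' clause for free) and the matter is evolved in the ``canonical frame'' $U_b^{-1}(c)$ before being pushed back into the current frame.

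First I would check that $F'$ is a genuine CA, i.e.\ local and shift-invariant. Locality is the one point that requires care once $s>0$: since the blocks overlap, I must track how the neighbourhood inflates when composing $U_b^{-1}$, then $F$ (of radius $r$), then $U_b$. A site $x$ of the output depends on $c$ and $b$ only within radius $r+O(s)$, because only the finitely many factors $(b_y)_y$ with $y$ within $s$ of the relevant cells ever act there. Shift-invariance then follows from that of $F$ together with the shift-covariance of $b\mapsto U_b$. The simulation property is immediate: when $b=e$ is constant equal to $1_G$ one has $U_e=\mathrm{id}$, whence $F'(c,e)=(F(c),e)$.

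Next I would pin down $\Gamma'$. I take $G'$ to be the copy of $G$ acting, at each site $x$, as $g$ on the matter block around $x$ and by left multiplication $b_x\mapsto g\,b_x$ on the single gauge-field cell at $x$; this is again a local gauge-transformation group, since distinct sites touch distinct gauge cells while the matter parts already commute by hypothesis. The bijection $B:\Gamma'\to\Gamma$ is the evident identification of parameters $\gamma:\mathbb{Z}^d\to G$. By construction such a $\gamma'$ acts as $\gamma'(c,b)=(U_\gamma(c),\gamma\cdot b)$ with $(\gamma\cdot b)_x=\gamma_x b_x$, so it is block-diagonal with matter part exactly $\gamma$, which gives the extension property as required by Definition~\ref{def : Absolute-gauge extension}.

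The heart of the argument is gauge-invariance, which I would prove in the strong form $F'\circ\gamma'=\gamma'\circ F'$ (so $Z=\mathrm{id}$). The key algebraic fact is the homomorphism identity
\[
U_{\gamma\cdot b}=U_\gamma\,U_b,
\]
obtained by commuting each $(b_y)_y$ past the $(\gamma_x)_x$ with $x\neq y$ and using $(\gamma_x)_x(b_x)_x=(\gamma_x b_x)_x$ at coinciding sites. Granting this, $F'(\gamma'(c,b))$ has matter part $U_{\gamma\cdot b}\,F\,U_{\gamma\cdot b}^{-1}U_\gamma(c)$, and since $U_{\gamma\cdot b}^{-1}U_\gamma=U_b^{-1}$ and $U_{\gamma\cdot b}=U_\gamma U_b$ this collapses to $U_\gamma\,U_b\,F\,U_b^{-1}(c)$, which is precisely the matter part of $\gamma'(F'(c,b))$; the gauge fields agree since both equal $\gamma\cdot b$. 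I expect the main obstacle to be exactly this identity together with the locality check: both hinge on the commutation condition of Definition~\ref{def : Local gauge-transformation group}, which is what allows the globally-defined conjugation $U_b F U_b^{-1}$ to nonetheless be a local, gauge-invariant rule for arbitrary radius $s$.
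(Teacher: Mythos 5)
Your proposal is correct and takes essentially the same route as the paper: the paper likewise sets $\Lambda=G$, defines $F'(c,a)=\bigl(a\circ F\circ a^{-1}(c),\,a\bigr)$ with the gauge field held fixed, extends each gauge-transformation to act by left composition $a_x\mapsto g\circ a_x$ on the gauge cell, and proves the strict commutation $F'\circ\gamma'=\gamma'\circ F'$ using exactly your identity $U_{\gamma\cdot a}=U_\gamma U_a$, justified by the commutation condition of Definition~\ref{def : Local gauge-transformation group}. The only cosmetic difference is that the paper makes locality explicit by writing the rule over the fixed neighbourhood $R^5_x$, where you instead argue a generic $r+O(s)$ radius bound.
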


\begin{proof}
We give here a constructive proof for any CA over $\mathbb{Z}^d$.

Let $F$ be a CA of radius $s'$ and $G$ be a local gauge-transformation group of radius $s$. We denote $r$ the highest radius between $s$ and $s'$. In the following we will consider neighbourhoods $R_x^k=[x-k\cdot r,x+k\cdot r]^d$ of each point $x\in\mathbb{Z}^d$, with $[a,b]=\{n\in \mathbb{Z}\ |\ a\leq n\leq b\}$.

First we choose $G$ as gauge field alphabet and define the effect of a global gauge-transformation $\gamma_x$, which applies $g\in G$ around $x$ according to $\gamma_x(a)_x=g\circ a_x$,
where $a$ denotes a gauge field configuration. The definition is so that the gauge field simply keeps track of every gauge-transformation applied around $x$. For any other cell of the gauge field, $\gamma_x$ has no impact. This condition along with the extension property of Definition \ref{def : Absolute-gauge extension} fully defines the new gauge-transformation group $G'$.

Next we define a new local rule $f'$ over the neighbourhood $R^5_x$. The definition below just states that the local rule applies $\prod_{i\in R^2_x}a_i^{-1}$ to undo all previous gauge-transformations, it then computes the evolution of $f$, and finally reapplies all the gauge-transformations, i.e.  
$$f'\big(c_{|R_x^5},a_{|R_x^5}\big)=\Big(\prod_{i\in R_x^1}a_i\circ f_{|R_x^2}\big(\prod_{i\in R_x^4}a_i^{-1}(c_{|R_x^5})_{|R_x^3}\big)\ ,\  a_x\Big)$$
where $f_{|R_x^2}$ denotes the function from $R_x^3$ to $R_x^2$ which calculates the temporal evolution of our automaton.

We can rewrite this local rule globally, using the notation $a$ to denote either the gauge field or a gauge-transformation which applies $a_x$ around each position $x$:
$$F'(c,a)_x=\Big(a\circ F\circ a^{-1}(c),a\Big)_x$$

Let us check that $(F',\Gamma')$ is a gauge extension:
\begin{itemize}
    \item \emph{(Simulation)} When the gauge field is the identity $f'$ acts the same as $f$ over the matter field, and as the identity over the gauge field.
    \item \emph{(Extension)} We used this property to define $G'$.
    \item \emph{(Gauge-invariance)} For any $\gamma'\in\Gamma'$---where $\Gamma'$ is built from $G'$ through definition \ref{def : Gauge-transformations}---we must check that $\gamma'\circ F' = F'\circ \gamma'$. 
    We reason globally to simplify notations:
    
    \begin{align*}
        F'\circ\gamma'(c,a)&=F'(\gamma(c),\gamma(a)) \tag{Extension \ref{def : Absolute-gauge extension}}\\
        &= \Big(\gamma(a)\circ F\circ{\gamma(a)^{-1}}(\gamma(c)),\ \ \gamma(a)\Big)\tag{Definition of $F'$}\\
        &= \Big(\gamma\circ a\circ F\circ {a^{-1}}\circ{\gamma^{-1}}\circ\gamma \big(c\big),\ \ \gamma(a)\Big)\tag{Definition \ref{def : Local gauge-transformation group}}\\
        &= \Big(\gamma\circ a\circ F\circ {a^{-1}}\big(c\big),\ \ \gamma(a)\Big)\tag{Definition \ref{def : Local gauge-transformation group}}
    \end{align*}
    
    \begin{align*}
        \gamma'\circ F'(c,a)&=\gamma' \Big(a\circ F\circ {a^{-1}}(c),\ \ a\Big)\tag{Definition of $F'$}\\
        &= \Big(\gamma\circ a\circ F\circ {a^{-1}}\big(c\big),\ \ \gamma(a)\Big) \tag{Extension \ref{def : Absolute-gauge extension}}
    \end{align*}
\end{itemize}
\end{proof}

\section{(Relative) gauge-invariant CA are universal}
\label{sec:universal}

Results in this section are only given for dimension 1.

In \cite{salo2013color}, the authors prove that for any alphabet $\Sigma$ containing 2 symbols or more, there exists an intrinsically universal globally $G$-symmetric cellular automaton on $\Sigma^\mathbb{Z}$, where $G$ is the group of all permutations of $\sigma$. The proof involves an extension which encodes the information in the structure of the configuration rather than the states, the idea being that a global transformation will conserve the structure---thus the information. Combining this result and Th. \ref{theo:globaltolocal}, we can easily prove the following corollary:
\begin{corollary}[Relative gauge-invariant cellular automata are universal]
    For any alphabet $\Sigma$ with $|\Sigma|\geq 2$, any gauge transformation group $G$ and any cellular automaton $F$ on $\Sigma^\mathbb{Z}$, there exists a $G'$-gauge-invariant CA $F'$ which intrinsically simulates $F$, with $G'$ the extended gauge-transformation based on $G$. Moreover, $F'$ arises as the relative gauge extension of a CA.
\end{corollary}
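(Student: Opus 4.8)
The plan is to chain together two results that are already available: the intrinsic universality of globally symmetric CA from \cite{salo2013color}, and the equivalence between global $G$-symmetry and the existence of a relative gauge extension established in Theorem~\ref{theo:globaltolocal}. The strategy is therefore \emph{not} to build a universal gauge-invariant CA from scratch, but to observe that a universal object of the right kind already sits on the globally symmetric side of the equivalence, and then transport it across.

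First I would fix the alphabet. Given $\Sigma$ with $|\Sigma|\geq 2$ and an arbitrary CA $F$ on $\Sigma^{\mathbb{Z}}$, I invoke \cite{salo2013color} to obtain an intrinsically universal CA $U$ that is globally $\tilde{G}$-symmetric, where $\tilde{G}$ is the full permutation group of the relevant alphabet. Since $U$ is intrinsically universal, it simulates $F$ (indeed every CA on that alphabet). The key point to check here is that the particular group $G$ appearing in the statement can be accommodated: $U$ is blind under all permutations, hence in particular it is globally $G$-symmetric for the given subgroup $G$ of permutations. So $U$ is a globally $G$-symmetric, intrinsically universal CA that simulates $F$.

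Next I would apply the forward direction (\ref{theo:piblind1} $\Rightarrow$ \ref{theo:piblind2}) of Theorem~\ref{theo:globaltolocal} to $U$. Because $U$ is globally $G$-symmetric, the theorem hands us a relative gauge extension $(U',\Gamma')$ in which the gauge field evolution is the identity and $U'$ commutes with every element of $\Gamma'$; in particular $U'$ is $G'$-gauge-invariant, where $G'$ is the extended gauge-transformation group built from $G$ via Definition~\ref{def : relative gauge extension}. I then set $F' := U'$. It remains to note that intrinsic simulation is preserved: by the simulation clause of Definition~\ref{def : Absolute-gauge extension}, when the gauge field is fixed to the identity everywhere, $U'$ reproduces one step of $U$ on the matter field, and since this restriction is a sub-automaton, the intrinsic simulation of $F$ by $U$ lifts to an intrinsic simulation of $F$ by $U'$. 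Thus $F'$ is a $G'$-gauge-invariant CA that intrinsically simulates $F$ and arises as the relative gauge extension of a CA, as required.

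I expect the only genuine obstacle to be bookkeeping about \emph{which} alphabet the universality result of \cite{salo2013color} produces and whether its blindness group matches the $G$ demanded in the corollary. The cited result gives blindness under the full symmetric group on a possibly re-encoded alphabet, so I would spell out that full-symmetric blindness implies $G$-symmetry for any subgroup $G$, and confirm that the encoding used for intrinsic universality is compatible with treating $G$ as a permutation group of $\Sigma$. Everything else is a direct composition of the quoted statements, so the corollary follows without further computation, matching the paper's remark that it is \emph{easily} proved by combining the two ingredients.
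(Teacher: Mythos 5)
Your proposal is correct and follows essentially the same route as the paper: invoke the intrinsically universal globally symmetric CA of \cite{salo2013color} (noting that blindness under the full symmetric group gives global $G$-symmetry for any subgroup $G$), apply the forward direction of Theorem~\ref{theo:globaltolocal} to obtain the relative gauge extension, and use the identity gauge-field evolution to conclude that the extension simulates the universal CA over all time steps, hence intrinsically simulates $F$. The paper's proof is a terser version of exactly this composition, so there is nothing to add.
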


\begin{proof}
    Let $F''$ be a globally $G$-symmetric CA on $\Sigma^\mathbb{Z}$ that intrinsically simulates $F$ using 
    \cite[Theorem~1]{salo2013color}. 
    From Th. \ref{theo:globaltolocal}, $(F'',G)$ admits a relative gauge extension $(F',G')$ with 
    the evolution of the gauge field being the identity. Thus
    $F'$ intrinsically simulates $F''$, from which it follows that
    $F'$ is a $G'$-gauge-invariant CA which intrinsically simulates $F$.
\end{proof}

Such result is interesting on two accounts: (i) it shows that universality only requires relative gauge information and does not need any absolute information stored in the gauge field; (ii) it shows that relative gauge extensions, which are the ones usually appearing in Physics, are universal. Still, the universality of gauge-invariant CA is an even more direct corollary of Th. \ref{th:everyCA}. With that construction we can just pick any universal CA $F$, any local transformation group $G$ of any radius, and gauge-extend $F$ into $F'$. $F'$ acts trivially on the gauge field in this construction, it thus intrinsically simulates $F$ and is therefore universal.

\section{Sourcing the gauge field with the matter field}\label{sec:sourcing}

In both the construction of Th. \ref{theo:globaltolocal} and Th. \ref{th:everyCA}, the evolution rule of the gauge field to the identity, meaning that it does not evolve with time. It is often the case in Physics that a further twist is then introduced, so that the the matter field now influences the gauge field. We wish to do the same and find a gauge-extended CA whose gauge field influences the matter field, and whose matter field backfires on the gauge field. 

We use here the general definition of a gauge extension (Definition \ref{def : Absolute-gauge extension}) to search a gauge extension $F'$ of a non gauge-invariant CA $F$. Without loss of generality, $F' = (F'_1,F'_2)$, where $F'_1$ takes $(c,a)$ as input and returns the matter field after one time-step, and $F'_2$ does the same for the gauge field. We impose that the gauge (resp. matter) field be sourced by the matter (resp. gauge) field, in the strongest possible manner, i.e. we ask for $F'_2$ (resp. $F'_1$) to be injective in its first (resp. second) parameter.

We begin by choosing the alphabet $\Sigma=\{0,1,2\}^2$ and the space $\mathbb{Z}$ and we denote by $c_i^l$ and $c_i^r$ respectively the left and the right part of the cell.
In the following definitions we consider that all the additions and all the subtractions are modulo 3.

\noindent We define the initial automaton by the local rule:
$F(c)_i=(c_{i-1}^l-c_{i}^r,c_i^l+c_{i+1}^r)$, cf. Fig.\ref{fig:localrule of f}.

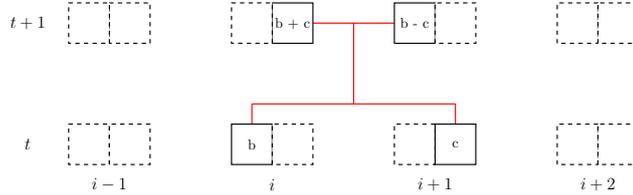
\begin{figure}[ht!]
    \centering
    \resizebox{0.6\textwidth}{!}{\begin{tikzpicture}
    
    \draw[color=red,thick] (-0.5,1.5) -- (4.5,1.5)
        (-0.5,1.5) -- (-0.5,1)
        (4.5,1.5) -- (4.5,1)
        (2,1.5) -- (2,3.5);
    \draw[color=red,thick] (1,3.5) -- (3,3.5);

    
    \filldraw[color=black,dashed, fill=white, thick](-4,0) rectangle (-5,1);
    \filldraw[color=black,dashed, fill=white, thick](-4,0) rectangle (-3,1);
    
    \filldraw[color=black, fill=white, thick](-1,0) rectangle (0,1);
    \draw (-0.5,0.5) node {b};
    \filldraw[color=black,dashed, fill=white, thick](0,0) rectangle (1,1);

    \filldraw[color=black,dashed, fill=white, thick](3,0) rectangle (4,1);
    \filldraw[color=black, fill=white, thick](4,0) rectangle (5,1);
    \draw (4.5,0.5) node {c};
    
    \filldraw[color=black,dashed, fill=white, thick](7,0) rectangle (8,1);
    \filldraw[color=black,dashed, fill=white, thick](8,0) rectangle (9,1);
    
    
    \filldraw[color=black,dashed, fill=white, thick](-3,3) rectangle (-4,4);
    \filldraw[color=black,dashed, fill=white, thick](-4,3) rectangle (-5,4);
    
    \filldraw[color=black, fill=white, thick](0,3) rectangle (1,4);
    \draw (0.5,3.5) node {b + c};
    \filldraw[color=black,dashed, fill=white, thick](0,3) rectangle (-1,4);

    \filldraw[color=black, fill=white, thick](3,3) rectangle (4,4);
    \draw (3.5,3.5) node {b - c};
    \filldraw[color=black,dashed, fill=white, thick](4,3) rectangle (5,4);
    
    \filldraw[color=black,dashed, fill=white, thick](7,3) rectangle (8,4);
    \filldraw[color=black,dashed, fill=white, thick](8,3) rectangle (9,4);
  
    \draw (4, -0.5) node {\Large $i+1$};
    \draw (0,-0.5) node {\Large $i$};
    \draw (-4,-0.5) node {\Large $i-1$};
    \draw (8,-0.5) node {\Large $i+2$};
    
    \draw (-6, 0.5) node {\Large $t$};
    \draw (-6, 3.5) node {\Large $t+1$};

\end{tikzpicture}}
    \caption{The local rule of F}
    \label{fig:localrule of f}
\end{figure}

We consider a local group of gauge-transformation containing three elements, namely:
    $$ G=\{\sigma_0,\sigma_1,\sigma_2\}$$
    where $\sigma_i$ is the function of radius $0$ that adds $i$ to each part of the cell.
    
We can check that $F$ is not gauge-invariant for $\Gamma$ (as defined from $G$), by considering a configuration $c$ which associates $(1,1)$ to position $i$ and $(0,0)$ to all other positions. Let $\gamma$ be a gauge-transformation which applies $\sigma_2$ over $i$, $\gamma(c)$ is then the fully empty configuration $e$. Since $F$ preserves emptiness we have:
$$F\circ\gamma(c)=\gamma(c)=e$$
But when we apply $F$ to $c$ we obtain non-empty cells in $i+1$ and $i-1$, this contradicts the gauge-invariance definition. This idea is illustrated in sub-Figs \ref{subfigs: F} and \ref{subfigs: F gamma}.

\begin{figure}[ht!]
    \centering
    \begin{subfigure}[b]{0.49\textwidth}
    \centering
    \resizebox{\textwidth}{!}{ \includegraphics{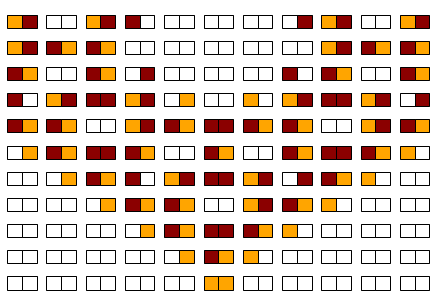}}
    \caption{$F$ over $c$}
    \label{subfigs: F}
    \end{subfigure}
    \hfill
    \begin{subfigure}[b]{0.49\textwidth}
    \centering
    \resizebox{\textwidth}{!}{ \includegraphics{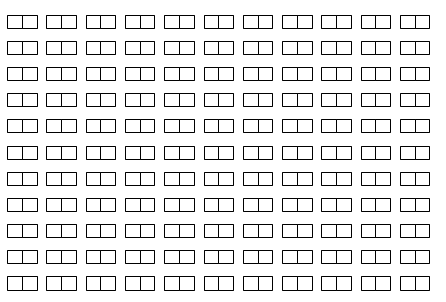}}
    \caption{$F$ over $\gamma(c)$}
    \label{subfigs: F gamma}
    \end{subfigure}
    
    \begin{subfigure}[b]{0.49\textwidth}
    \centering
    \resizebox{\textwidth}{!}{ \includegraphics{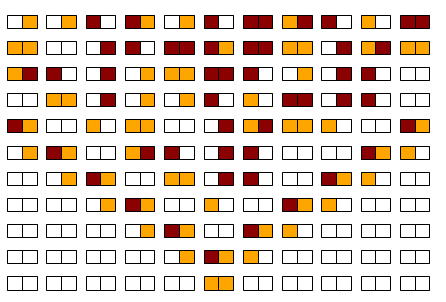}}
    \caption{$F'$ over $(c,e)$}
    \label{subfigs: F' }
    \end{subfigure}
    \hfill
    \begin{subfigure}[b]{0.49\textwidth}
    \centering
    \resizebox{\textwidth}{!}{ \includegraphics{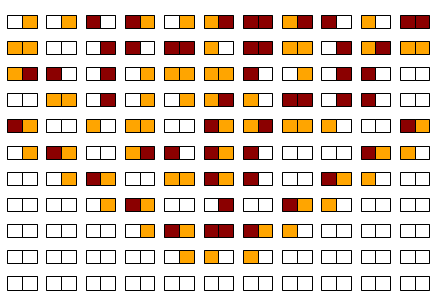}}
    \caption{$F'$ over $\gamma(c,e)$}
    \label{subfigs: F' gamma}
    \end{subfigure}
    \caption{Space-time representation of $F$ and $F'$ over the same initial configurations $c$ and $\gamma(c)$, where $c$ is the configuration at the bottom line of \ref{subfigs: F} \ref{subfigs: F' }. The values $1$ and $2$ are respectively represented by orange and red, while $0$ is just an empty cell. Only the matter field is represented here.}
    \label{fig:illustration GI}
\end{figure}

    We now provide a gauge extension for $(F,\Gamma)$. We begin by choosing the gauge field alphabet $\Lambda=\Sigma$ and placing the gauge field between each cell.

    Next we engineer the injective influence of the gauge field over the matter field in the simplest possible way. We simply add, to each sub-cell of the matter field, the value of the nearest sub-cell of the gauge field during each evolution. 
    See Fig.  \ref{figs: new local rule}.
    
    Finally we extend gauge-transformations to the gauge field (Fig.  \ref{figs: new gauge-transformation group}) and choose the evolution of the gauge field (green cells of Fig.  \ref{figs: new local rule}) to make sure that $F'$ is a reversible gauge-invariant dynamics. Notice that the figures where chosen so that all cases are given. The proof of gauge-invariance for this example is given in appendix-\ref{appendix: gauge-invariance}, and can be visually seen from Figs.  \ref{subfigs: F' } and \ref{subfigs: F' gamma} where a gauge-transformation does not impact the overall dynamics.

\begin{figure}[ht!]
    \centering
    \begin{subfigure}[b]{0.52\textwidth}
    \centering
    \resizebox{\textwidth}{!}{\definecolor{gaugeCol}{rgb}{0.40390625,0.6109375,0.09109375}
\begin{tikzpicture}

    \draw[thick] (-3,-1.5) -- (6,-1.5)
        (-3,3.5) -- (6,3.5);;
    \draw[thick,dashed] (-5,-1.5) -- (9,-1.5)
        (-5,3.5) -- (9,3.5);;
    
    \filldraw[color=gaugeCol, fill=white,ultra thick] (2, 3.5) circle (2) ;
    \draw[color=gaugeCol,ultra thick] (2,1.5) -- (2,5.5);
    \draw (1.15,3.5) node[scale=1.5] { $\color{gaugeCol}a_1\color{red}-k_2$};
    \draw (3,3.5) node[scale=1.5] {$\begin{matrix}
    \color{gaugeCol}a_2\color{red}-k_2\\
    \color{red}-k_1
    \end{matrix}$};

    \filldraw[color=gaugeCol, fill=white,ultra thick] (2, -1.5) circle (2) ;
    \draw[color=gaugeCol,ultra thick] (2,-3.5) -- (2,0.5);
    \draw[gaugeCol] (1.15,-1.5) node[scale=2] {$a_1$};
    \draw[gaugeCol] (3,-1.5) node[scale=2] {$a_2$};

    \filldraw[color=black, fill=white, ultra thick](-2.25,-2.25) rectangle (-3.75,-0.75);
    \draw (-3,-1.5) node[scale=2] {$c_1^l$};
    \filldraw[color=black,  fill=white, ultra thick](-0.75,-2.25) rectangle (-2.25,-0.75);
    \draw (-1.5,-1.5) node[scale=2] {$c_1^r$};

    \filldraw[color=black, fill=white, ultra thick](4.75,-2.25) rectangle (6.25,-0.75);
    \draw (5.5,-1.5) node[scale=2] {$c_2^l$};;
    \filldraw[color=black, fill=white,ultra thick](6.25,-2.25) rectangle (7.75,-0.75);
    \draw (7,-1.5) node[scale=2] {$c_2^r$};

    \filldraw[color=black, fill=white, ultra thick](-2.25,2.75) rectangle (-3.75,4.25);
    \draw (-3,3.5) node[scale=1.2] {$c_1^l\color{red}+k_1$};
    \filldraw[color=black,  fill=white, ultra thick](-0.75,2.75) rectangle (-2.25,4.25);
    \draw (-1.5,3.5) node[scale=1.2] {$c_1^r\color{red}+k_1$};

    \filldraw[color=black, fill=white, ultra thick](4.75,2.75) rectangle (6.25,4.25);
    \draw (5.5,3.5) node[scale=1.2] {$c_2^l\color{red}+k_2$};;
    \filldraw[color=black, fill=white, ultra thick](6.25,2.75) rectangle (7.75,4.25);
    \draw (7,3.5) node[scale=1.2] {$c_2^r\color{red}+k_2$};

    \draw[color=red] (-2.25, 1) ellipse (5 and 0.4);
    \draw[color=red] (6.25, 1) ellipse (5 and 0.4);
    \draw[color=red] (-2.25, 1) node[scale=2] {\Large $\sigma_{k_1}$};
    \draw[color=red] (6.25, 1) node[scale=2] {\Large $\sigma_{k_2}$};
    
    \draw[gray] (-6.5, -1.5) node[scale=2] {\Large $c$};
    \draw[gray] (-6.5, 3.5) node[scale=2] {\Large $\widetilde\gamma(c)$};

\end{tikzpicture}}
    \caption{The new gauge-transformation group $\Gamma'$}
    \label{figs: new gauge-transformation group}
    \end{subfigure}
    \hfill
    \begin{subfigure}[b]{0.47\textwidth}
    \centering
    \resizebox{\textwidth}{!}{ \definecolor{gaugeCol}{rgb}{0.40390625,0.6109375,0.09109375}
\begin{tikzpicture}

    \draw[thick] (-3,-1.5) -- (6,-1.5)
        (-3,3.5) -- (6,3.5);;
    \draw[thick,dashed] (-5,-1.5) -- (9,-1.5)
        (-5,3.5) -- (9,3.5);;
    
    \filldraw[color=gaugeCol, fill=white,ultra thick] (2, 3.5) circle (2) ;
    \draw[color=gaugeCol,ultra thick] (2,1.5) -- (2,5.5);
    \draw (1.15,3.2) node[scale=1.8] { \color{gaugeCol}$\begin{matrix}
            a_2 \\
            \color{black}+\\
            \color{black}c_{1}^l
        \end{matrix}$};
    \draw (3,3.2) node[scale=1.8] { \color{gaugeCol}$\begin{matrix}
            a_{1}+a_{2} \\
            \color{black}+\\
            \color{black}c_{2}^r
        \end{matrix}$};

    \filldraw[color=gaugeCol, fill=white,ultra thick] (2, -1.5) circle (2) ;
    \draw[color=gaugeCol,ultra thick] (2,-3.5) -- (2,0.5);
    \draw[gaugeCol] (1.15,-1.5) node[scale=2] {$a_1$};
    \draw[gaugeCol] (3,-1.5) node[scale=2] {$a_2$};

    \filldraw[color=black, fill=white, ultra thick](-2.75,-2.5) rectangle (-4.75,-0.5);
    \draw (-3.75,-1.5) node[scale=2] {$c_1^l$};
    \filldraw[color=black, dashed, fill=white, ultra thick](-0.75,-2.5) rectangle (-2.75,-0.5);
    \draw (-1.75,-1.5) node {\color{gray}$\ldots$};

    \filldraw[color=black,dashed, fill=white, ultra thick](4.75,-2.5) rectangle (6.75,-0.5);
    \draw (5.75,-1.5) node {\color{gray}$\ldots$};;
    \filldraw[color=black, fill=white,ultra thick](6.75,-2.5) rectangle (8.75,-0.5);
    \draw (7.75,-1.5) node[scale=2] {$c_2^r$};

    \filldraw[color=black,dashed, fill=white, ultra thick](-2.75,2.5) rectangle (-4.75,4.5);
    \draw (-3.75,3.5) node {\color{gray}$\ldots$};
    \filldraw[color=black,  fill=white, ultra thick](-0.75,2.5) rectangle (-2.75,4.5);
    \draw (-1.75,3.5) node[scale=1.5] {$\begin{matrix}
    c_{1}^l+c_{2}^r\\
    \color{gaugeCol}+\\
    \color{gaugeCol}a_{1}
    \end{matrix}$};

    \filldraw[color=black, fill=white, ultra thick](4.75,2.5) rectangle (6.75,4.5);
    \draw (5.75,3.5) node[scale=1.5] {$\begin{matrix}
    c_1^l-c_{2}^r\\
    \color{gaugeCol}+\\
    \color{gaugeCol}a_{2}
    \end{matrix}$};;
    \filldraw[color=black,dashed, fill=white, ultra thick](6.75,2.5) rectangle (8.75,4.5);
    \draw (7.75,3.5) node {\color{gray}$\ldots$};

    \draw[gray] (-6.5, -1.5) node[scale=2] {\Large $c$};
    \draw[gray] (-6.5, 3.5) node[scale=2] {\Large $F'(c)$};

\end{tikzpicture}}
    \caption{The new local rule $f'$}
    \label{figs: new local rule}
    \end{subfigure}
    \caption{Description of the gauge extension $(F',\Gamma')$. Green circles represent the gauge field and black rectangles the matter field.}
    \label{fig: Description of the gauge extension (F',Gamma')}
\end{figure}
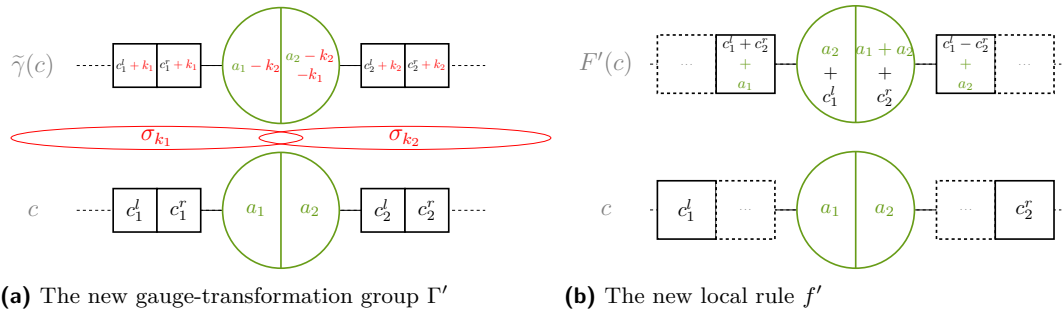


Overall, starting from a CA $F$ we have defined a gauge extension $F'$ which features a strong interaction between the gauge and the matter field. In the world of CA this is the first example of the kind \cite{ArrighiGauge,arrighi2019non}. Building this example required the choice of a very specific extension of the gauge-transformation over the gauge field (cf Fig.\ref{figs: new gauge-transformation group}) so as to obtain gauge-invariance whilst preserving reversibility and injectivity. Under a relative gauge extension this extension of the gauge-transformation is forced upon us, it seems hard to find such an example.

Notice that since the gauge field is sourced by the matter field it typically does not remain empty during the evolution. Thus $F'$ can only simulate $F$ for one time step. This may seem strange from a mathematical point of view, as we may expect from an extension that it preserves the original dynamics over several steps, too. But in Physics the initial non gauge-invariant theory is indeed used to inspire a more complex dynamics, which enriches and ultimately diverges from the original one. Fig.\ref{fig:9000 temporal step} shows how starting from the same configuration, one obtains very different evolutions.




\begin{figure}[ht!]
    \centering
    \begin{subfigure}[b]{0.49\textwidth}
    \centering
    \resizebox{\textwidth}{!}{ \includegraphics{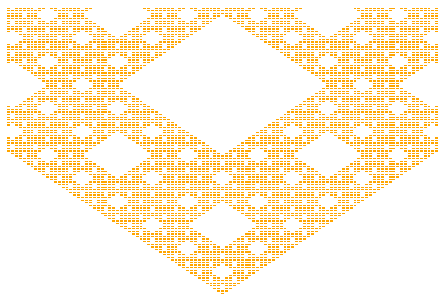}}
    \caption{Evolution of $F$}
    \label{GT_over_GF_a}
    \end{subfigure}
    \hfill
    \begin{subfigure}[b]{0.49\textwidth}
    \centering
    \resizebox{\textwidth}{!}{ \includegraphics{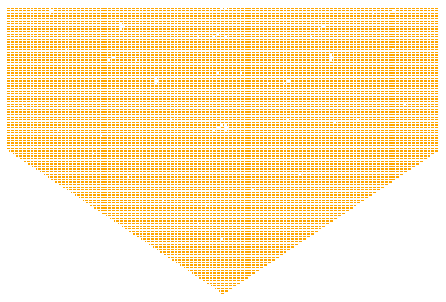}}
    \caption{Evolution of $F'$}
    \end{subfigure}
    \caption{Evolution of $F$ and its gauge extension $F'$ over 9000 temporal step}
    \label{GT_over_GF_b}
    \label{fig:9000 temporal step}
\end{figure}

\label{sec:interactions}

\section{Conclusion}

In order to obtain a gauge-invariant theory, starting from a non-gauge-invariant one, the usual route is to extend the theory by means of a gauge field. As discussed in the introduction, the gauge field usually turns out to be a connection between gauge choices at neighbouring points, but there is no immediate reason why this should be the case. In the first part, we formalised, in the framework of Cellular Automata (CA), the notions of gauge extension and relative gauge extension. The latter forces the gauge field to act as a connection. 

Again in Physics one usually starts from a theory featuring a global symmetry, before `making it local' through the gauge extension. Again there is no immediate reason why this should be the case. 
In our framework, we were able to establish a logical relation between global symmetry and relative gauge-invariance. Namely we proved that the CA that admit a relative gauge extension are exactly those that have the corresponding global symmetry. To the best of our knowledge, no continuous equivalent of that theorem exists in the literature; perhaps the discrete offers better opportunities for formalisation. 

We also proved that any CA can be extended into a gauge-invariant one. Thus, gauge-invariant CA are universal. Two different constructions were provided. The first construction uses the gauge field to store, at each location, the value of the gauge-transformation which the matter field has undergone at that location, thereby allowing for the action of the transformation to be counteracted. This path uses a non-relative gauge extension. The second construction puts together the fact that any CA can be made globally-symmetric \cite{salo2013color}, with the fact that any globally-symmetric CA admits a relative gauge extension. Thus, relative gauge-extended CA are universal.

Whilst the introduction of the gauge field is initially motivated by the gauge symmetry requirement, the gauge field ends up triggering new, richer behaviours as it influences the matter field. However, in order for it to mediate the interactions within the matter field, as is the case in Physics, it should be the case that the matter field also influences the gauge field---and back. In this paper, we provided a first example of a gauge-extended CA whose matter field injectively influences gauge field, whilst preserving reversibility. This was done through a general gauge extension, we leave it open whether this can be achieved through a relative gauge extension. The difficulty here is that relative gauge extensions seem to store just the minimal amount of information required for gauge-invariance, and any further influence upon the gauge field runs the risk of jeopardising that. 

This difficulty can be circumvented in the quantum setting: the Quantum Cellular Automaton of \cite{ArrighiQED} arises from a relative gauge extension, and yet features and a gauge field which is `sourced' by the matter field. The construction directly yields a quantum simulation algorithm for one-dimensional quantum electrodynamics. This should serve us a reminder that whilst this work is theoretical, it is not merely of theoretical interest. Gauge extensions is exactly what one needs to do in order to capture physical interactions within discrete quantum models. This may lead for instance to digital quantum simulation algorithms, with improved numerical accuracy, as fundamental symmetries are preserved throughout the computation. 



\bibliography{biblio}

\appendix

\section{Proof of gauge-invariance for Sec. \ref{sec:sourcing}}\label{appendix: gauge-invariance}

In order to prove that the example illustrated in Fig-\ref{fig: Description of the gauge extension (F',Gamma')} is gauge-invariant, we will show that $\gamma' \circ F' = F' \circ \gamma'$ for any $\gamma' \in \Gamma'$. It is sufficient to prove this locally, we do so using the notations of the figure and we denote by $f'$ and $g'$ the local application of the evolution and a gauge-transformation: 
\begin{align*}
    f'\circ g' (c_1^l,a_1,a_2,c_2^r)
        &= \begin{pmatrix*}[l]c_1^l + c_2^r + a_1 + k_1, \\ 
            a_2 +c_1^l -k_2, \\
            a_1+a_2+c_2^r-k_1-k_2, \\
            c_1^l-c_2^r + a_2+k_2\end{pmatrix*} \\
        &=g'\circ f' (c_1^l,a_1,a_2,c_2^r) 
\end{align*}

Therefore $F'$ is $\Gamma'$-gauge-invariant.

\end{document}